\documentclass[a4paper,reqno]{amsart}
\usepackage{amssymb,latexsym,amsmath,amsthm,mathtools}
\usepackage{euscript}
\usepackage{graphics,color}
\usepackage[margin=3cm]{geometry}
\usepackage{tikz}
\usetikzlibrary{calc}
\usepackage{enumitem}
\usepackage{caption}
\usepackage{hyperref}
\hypersetup{colorlinks=true,linkcolor=blue!55!black,citecolor=teal,urlcolor=teal}

\newcommand{\J}{\mathfrak{J}}
\newcommand{\R}{\mathfrak{R}}
\newcommand{\su}{\mathfrak{su}}

\newcommand{\be}{\begin{equation}}
\newcommand{\ee}{\end{equation}}
\newcommand{\dx}{\partial_x}\newcommand{\dy}{\partial_y}

\newcommand{\hF}{\,{}_2F_1}
\newcommand{\hP}[2]{\widehat P^{(#1)}_{#2}}

\newcommand{\Lone}{L_1}\newcommand{\Lthree}{L_3}
\newcommand{\Xone}{X_1}\newcommand{\Xthree}{X_3}
\newcommand{\anticomm}[2]{\{#1,#2\}}

\newcommand{\argu}[3]{\left(\begin{matrix} #1\\#2\end{matrix}\,;\, #3\right)}

\newtheorem{theorem}{Theorem}[section]

\theoremstyle{definition}

\numberwithin{equation}{section}

\title[Realization embeddings into the rank two Jacobi algebra]
  {Realization embeddings of the rank two Racah algebra into the rank two Jacobi algebra}
\author[N.~Crampé, S.~Post, L.~Vinet]{Nicolas Crampé, Sarah Post, Luc Vinet}
\address{Université de Tours, Université d’Orléans, CNRS, IDP, UMR 7013, 37200 Tours, France}
\address{Laboratoire d'Annecy-le-Vieux de Physique Th\'eorique (LAPTh), CNRS, Universit\'e Savoie Mont Blanc,
BP 110, F-74941 Annecy-le-Vieux, France}
\address{Department of Mathematics, University of Hawaii at Manoa, Honolulu, HI 96822, USA}
\address{IVADO and Centre de recherches math\'ematiques, Universit\'e de Montr\'eal, P.O. Box 6128,
Centre-ville Station, Montr\'eal (Qu\'ebec), H3C 3J7, Canada}

\begin{document}

\begin{abstract}
We construct an explicit embedding of the rank two Racah algebra $\R_2$ into the rank two Jacobi
algebra $\J_2$. Working in the differential model of $\J_2$ on the triangle, we recall that the
subalgebra structure of $\J_2$ is organized by a pentagon whose four upper edges carry rank one
Jacobi algebras and whose base carries a rank one Racah algebra. Promoting the two multiplication
operators to Racah type generators by tridiagonalization turns every edge of
the pentagon into a rank one Racah algebra and realizes $\R_2$ inside $\J_2$. We determine the
common eigenfunctions of different commuting pairs explicitly as products
of Gauss hypergeometric functions and Jacobi polynomials. Different overlap coefficients are computed involving univariate Wilson polynomials as well as, for one pair of bases, bivariate Racah polynomials of Tratnik type. We also
recall how $\J_2$ arises as a contraction of the rank two Racah algebra $\R_2$.
\end{abstract}

\maketitle

\section{Introduction}\label{sec:intro}
We shall respectively denote by $\R_2$ and $\J_2$ the rank two Racah and Jacobi algebras. A
definition of the former can be found in \cite{de2017higher} and the latter has been recently
introduced in \cite{crampe2025rank}. According to their nomenclature these algebras are associated
with the bivariate Racah and Jacobi polynomials. We shall describe how $\R_2$ embeds in $\J_2$ and
shall exploit a model of $\J_2$ in terms of second order partial differential operators in two
variables to develop the characterization of the bivariate special functions that arise in this
framework.

In the univariate realm, it is well known that the Wilson and Racah polynomials sit at the top of the
Askey scheme classifying the bispectral hypergeometric orthogonal polynomials
\cite{koekoek2010hypergeometric}. All the other families in that scheme are obtained from these as
limits or specializations. At the level of the algebras encoding the bispectrality of these families,
the corresponding picture has the lower structures arising from the rank one Racah algebra $\R_1$
through contractions or special parameter values (see \cite{koornwinder2018dualities} for instance).
This provides a top--to--bottom view.

There is, alternatively, a bottom--to--top framework that goes under the name of
\emph{tridiagonalization} and was initiated by Ismail and Koelink \cite{ismail2012spectral}. In a
nutshell, the operator defining the eigenvalue equation of a higher family is constructed as a
quadratic expression in the bispectral operators of a lower family. This was carried out explicitly in
\cite{genest2016tridiagonalization} to obtain the Wilson and Racah polynomials from the Jacobi ones.
From the algebraic standpoint this provides an embedding of the higher algebra into the lower one ---
of the Racah algebra into the Jacobi algebra \cite{genest2016tridiagonalization} --- and yields a model
of the Racah algebra in terms of second order differential operators in one variable
\cite{gao2013classification}. This possibility was explained in \cite{genest2013equitable} through the
dimensional reduction, from three variables to one, of the realization of the Racah algebra in terms
of intermediate Casimir operators in three Bargmann representations of $\su(1,1)$, and was extended in
\cite{de2017higher} to the $n$-fold tensor product so as to realize $\R_n$ in terms of second order
differential operators in $n$ variables. This last remark is germane to the present objective.

We aim here to extend this bottom--up framework to the rank two algebras. The rank two Jacobi algebra
$\J_2$ was introduced recently \cite{crampe2025rank} on the basis of the bispectral properties of the
two-variable Jacobi polynomials \cite{koornwinder1975two,dunkl2014orthogonal}. Reversing the
perspective, it was further shown \cite{crampe2025algebraic} that the representations of $\J_2$ provide
an algebraic interpretation of these polynomials that is covariant under the dihedral group, and that
$\J_2$ is the dynamical algebra of the generic superintegrable model on the $2$-sphere
\cite{crampe2025super}. In the following we provide the embedding of $\R_2$ into $\J_2$, work out
explicitly the bases of the two-variable realization whose overlaps are Wilson and Racah polynomials.

The paper is organized as follows. Section~\ref{sec:rank1} reviews the rank one tridiagonalization \cite{genest2016tridiagonalization}  and the rank one Racah algebra it produces, together with the
closed-form hypergeometric eigenfunctions and the Wilson functions as overlap coefficients. Section~\ref{sec:embed}
recalls the differential model of $\J_2$ and its pentagonal subalgebra structure, defines the Racah
type generators, and gives the explicit correspondence with $\R_2$. Section~\ref{sec:bases}
determines common eigenfunctions of commuting subalgebras in $\R_2$ and Section~\ref{ss:overlaps} provides some overlap coefficients between these eigenfunctions and connect with the bivariate polynomials of Tratnik. Concluding remarks are offered in Section~\ref{sec:conc}. The
defining relations of $\J_2$ and of $\R_2$  are collected in Appendices~\ref{APP A} and \ref{app:algR4}, respectively.

\section{Tridiagonalization in the rank one case}\label{sec:rank1}
We recall the one-variable mechanism that will be lifted twice in the sequel.

\subsection{The Jacobi algebra and the hypergeometric operator}
Let introduce the hypergeometric operator
\begin{equation}
\ell=x(1-x)\partial_x^{2}+\bigl(\alpha+1-(\alpha+\beta+2)x\bigr)\dx\,,
\label{eq:hyp1}
\end{equation}
depending on the parameters $\alpha$ and $\beta$,
whose eigenfunctions are the Jacobi polynomials. More precisely, the following eigenvalue problems hold, for $n=0,1,2,\dots$,
\begin{equation}
    \ell \hP{\alpha,\beta}{n}(x)=-n(n+\alpha+\beta+1)\hP{\alpha,\beta}{n}(x)\,,
\end{equation}
with
\begin{equation}
    \hP{\alpha,\beta}{n}(x)=
    \hF\!\argu{-n,\,n+\alpha+\beta+1}{\alpha+1}{x}\,.
\end{equation}
The polynomials $\hP{\alpha,\beta}{n}(x)$ satisfy the following orthogonality relation:
\begin{equation}
\int_0^1\hP{\alpha,\beta}{n}(x)\hP{\alpha,\beta}{m}(x)\,x^{\alpha}(1-x)^{\beta}\,dx
=\delta_{n,m}N_n^{(\alpha,\beta)},
\label{eq:ortho}
\end{equation}
with the norm
\begin{equation}
N_n^{(\alpha,\beta)}=\frac{n!\, \Gamma(\alpha+1)\,\Gamma(n+\beta+1)}{(2n+\alpha+\beta+1)\,(\alpha+1)_n\,\Gamma(n+\alpha+\beta+1)}.
\label{eq:Nnorm}
\end{equation}
For later convenience, we introduce the following transformation of the Jacobi operator $\ell$:
\begin{subequations}
   \begin{align}
\mathbf c_{12}&=-\ell\big|_{x\to x/\epsilon}+\frac{(\alpha+\beta+1)^2-1}{4}\\
&=x(x-\epsilon)\partial_x^{2}-\bigl((\alpha+1)\epsilon-(\alpha+\beta+2)x\bigr)\dx
              +\frac{(\alpha+\beta+1)^2-1}{4},
\label{eq:c12}
\end{align}
\end{subequations}
where $\epsilon$ is a free parameter.
The eigenvalue problem for this operator reads
\begin{align}\label{eq:c12P}
  \mathbf c_{12} \hP{\alpha,\beta}{n}\left(\frac{x}{\epsilon}\right)=
  \frac{(2n+\alpha+\beta+1)^2-1}{4}
 \hP{\alpha,\beta}{n}\left(\frac{x}{\epsilon}\right) \,.
\end{align}

Let $X$ denote the operator multiplication by $x$. The operators $X$ and
$\mathbf c_{12}$ close into a quadratic algebra.
Explicitly, these operators satisfy the following commutation relations:
\begin{subequations}\label{eq:J1}
    \begin{align}
    [X,[X,\mathbf{c_{12}}]]&=2X^2-2\epsilon X\,,\\
     [\mathbf c_{12},[\mathbf c_{12},X]]&=2\{X,\mathbf c_{12}\}-2\epsilon \mathbf c_{12}-2\epsilon(\mathbf{c_1}-\mathbf{c_2})\,,
\end{align}
\end{subequations}
with
\begin{align}
    \mathbf{c_1}=\frac{\alpha^2-1}{4}\,,\quad \mathbf{c_2}=\frac{\beta^2-1}{4}\,.
\end{align}

This algebra encodes the bispectrality of the Jacobi polynomials and is called the bispectral algebra associated to the Jacobi polynomials or \emph{Jacobi algebra} $\J_1$ (of rank $1$).
Indeed, the overlap coefficients between the eigenbasis of $\mathbf{c_{12}}$ (the Jacobi
polynomials) and that of $X$ (Kronecker functions) are the Jacobi polynomials themselves.

\subsection{Tridiagonalization and the rank one Racah algebra\label{ssec:trid1}}
To any bispectral algebra, we can associate the so-called algebraic Heun operator \cite{grunbaum2017tridiagonalization,BaseilhacHeunAW,CrampeHeunLie,BergeronHeunRacah2020}:
\begin{equation}
M=\tau_1\,[X,\mathbf c_{12}]+\tau_2\,\{\mathbf c_{12}, X\}+\tau_3\,X+\tau_4\, \mathbf c_{12} +\tau_0\,,
\label{eq:Mgivz}
\end{equation}
for some parameter $\tau_i\ (i=0,1,2,3,4)$. It is the most general bilinear expression, in the bispectral operators, here $\mathbf{c_{12}}$ and $X$ for the Jacobi algebra. $M$ is tridiagonal in the eigenbasis of $X$ but also in the one of $\mathbf c_{12}$: it is why the procedure to construct $M$ is called tridiagonalisation. 
Following  \cite{genest2016tridiagonalization},
we know that, for a particular choice of the parameters $\tau_i$, the algebraic Heun operators have special properties. Indeed, let us specialize \eqref{eq:Mgivz} to:
\begin{align}\label{eq:Heun1}
    \mathbf{c_{23}}=\frac{\delta-\alpha}{2\epsilon}[X,\mathbf{c_{12}}]-\frac{1}{2\epsilon}\{X,\mathbf{c_{12}}\}-\frac{(\delta-\alpha)^2-\gamma^2}{4\epsilon}X+\frac{\alpha^2+(\delta-\alpha)^2-2}{4}\,,
\end{align}
for $\gamma$ and $\delta$ some free parameters.
This operator reads as follows in terms of $x$ and $\partial_x$:
\begin{align}
    \mathbf{c_{23}}&=x^2\left(1-\frac{x}{\epsilon}\right)\partial_x^2+x\left( \delta + 2-( \beta+\delta + 3 )\frac{x}{\epsilon}\right)\partial_x+\frac{\gamma^2-(\beta+\delta+2 )^2}{4}\, \frac{x}{\epsilon}+\frac{( \delta+1 )^2-1}{4}\,.
\end{align}
The couple of operators $(\mathbf{c_{12}},\mathbf{c_{23}})$ satisfy the following relations
\begin{subequations}\label{eq:R1}
    \begin{align}[\mathbf{c_{12}},[\mathbf{c_{12}},\mathbf{c_{23}}]]&=2\{\mathbf{c_{12}},\mathbf{c_{23}}\}+2\mathbf{c_{12}}^2-2(\mathbf{c_1}+\mathbf{c_2}+\mathbf{c_3}+\mathbf{c_{123}})\mathbf{c_{12}}-2(\mathbf{c_1}-\mathbf{c_{2}})(\mathbf{c_3}-\mathbf{c_{123}})\,,\\
[\mathbf{c_{23}},[\mathbf{c_{23}},\mathbf{c_{12}}]]&=2\{\mathbf{c_{12}},\mathbf{c_{23}}\}+2\mathbf{c_{23}}^2-2(\mathbf{c_1}+\mathbf{c_2}+\mathbf{c_3}+\mathbf{c_{123}})\mathbf{c_{23}}-2(\mathbf{c_1}-\mathbf{c_{123}})(\mathbf{c_3}-\mathbf{c_{2}})\,,
\end{align}
\end{subequations}
with the four parameters given by
\begin{align}
    &\mathbf{c_1}=\frac{\alpha^2-1}{4}\,,\qquad \mathbf{c_2}=\frac{\beta^2-1}{4}\,,\qquad \mathbf{c_3}=\frac{\gamma^2-1}{4}\,,\qquad \mathbf{c_{123}}=\frac{(\delta-\alpha)^2-1}{4}\,.
\end{align}

This algebra is called Racah algebra (of rank 1) $\R_1$ and is the bispectral algebra associated to the Racah polynomials \cite{GranovskiiRacah}.

There exists another algebraic Heun operator which allows us to obtain a realization of the Racah algebra of rank 1.
More precisely, let define the operator
\begin{align}
  \mathbf{c'_{23}}=  -\mathbf{c_{23}}-\mathbf{c_{12}}+\mathbf{c_1}+\mathbf{c_2}+\mathbf{c_3}+\mathbf{c_{123}}\,,
\end{align}
which reads as
\begin{align}\label{eq:Heun2}
  \mathbf{c'_{23}}=-\frac{\delta-\alpha}{2\epsilon}[X,\mathbf{c_{12}}]+\frac{1}{2\epsilon}\{X-\epsilon,\mathbf{c_{12}}\}+\frac{(\delta-\alpha)^2-\gamma^2}{4\epsilon}(X-\epsilon)+\frac{\beta^2+(\delta-\alpha)^2-2}{4}\,.
\end{align}
The couple $( \mathbf{c_{12}}, \mathbf{c'_{23}})$ also satisfies the defining relations of the Racah algebra by exchanging $\mathbf{c_{123}}\leftrightarrow \mathbf{c_{3}}$ in \eqref{eq:R1}.
 This recaps the fact observed previously \cite{grunbaum2017tridiagonalization} that the Racah algebra can be realized in two different manners in terms of the generators of the Jacobi algebra through the construction of a particular Heun operator.

\subsection{Eigenfunctions of \texorpdfstring{$\mathbf{c_{23}}$}{c23} and overlap coefficients}

It has been demonstrated in  \cite{genest2016tridiagonalization} that the functions
\begin{equation}
\psi^{(\beta,\gamma,\delta)}_m(x)=x^{m-\frac12(\delta+1)}\,
\hF\!\argu{m+\frac12( \beta+\gamma+ 1),\ m+\frac12(\beta-\gamma +1)}{\beta+1}{1-x}\,,
\label{eq:givz36}
\end{equation}
are eigenfunctions of $\mathbf{c_{23}}$:
\begin{align}\label{eq:c23psi}
    \mathbf{c_{23}}\psi^{(\beta,\gamma,\delta)}_m\left(\frac{x}{\epsilon}\right)=\frac{4m^2-1}{4}\,\psi^{(\beta,\gamma,\delta)}_m\left(\frac{x}{\epsilon}\right)\,,
\end{align}
It is important to notice that the previous eigenfunctions are square-integrable with the Jacobi weight associated to the Jacobi polynomials $\hP{\alpha,\beta}{n}$.

As the operators  $\mathbf{c_{12}}$ and $\mathbf{c_{23}}$ satisfy the defining relations of the  Racah algebra, we expect that the overlap coefficients between their respective eigenfunctions are proportional to the Racah or Wilson polynomials.
In \cite{genest2016tridiagonalization}, it has been shown that
the expansion of $\psi_m$ in the basis
$\hP{\alpha,\beta}{n}$ has Wilson polynomials as coefficients. In the notation used in this paper, this expansion reads:
\begin{subequations}
    \begin{align}\label{eq:psiP}
    \psi^{(\beta,\gamma,\delta)}_m(x)= \sum_{n=0}^\infty  W^{(\alpha,\beta,\gamma,\delta)}_n\left(m\right) \,\hP{\alpha,\beta}{n}(x)\,,
\end{align}
where the Wilson polynomials are defined by\footnote{There is a change of variable $m= ix$ in comparison to the Wilson polynomials.}
\begin{align}
   W^{(\alpha,\beta,\gamma,\delta)}_n(m) &=\ \mu^{(\alpha,\beta,\gamma,\delta)}_m \sigma^{(\alpha,\beta)}_n\\
   &\times {}_4F_3\!\argu{-n,n+\alpha+\beta+1,\frac12(2\alpha-\delta+1)+m,\frac12(2\alpha-\delta+1)-m}{\alpha+1,\frac12(2\alpha+\beta-\gamma-\delta+2),\frac12(2\alpha+\beta+\gamma-\delta+2)}{1}\,,\nonumber
\end{align}
\end{subequations}
and the normalizations are given by
\begin{align}
    \mu^{(\alpha,\beta,\gamma,\delta)}_m&=\frac{1}{N_0^{(\alpha,\beta)}}\int_0^1 \psi^{(\beta,\gamma,\delta)}_m(x)x^\alpha(1-x)^\beta dx\,,\\
    \sigma^{(\alpha,\beta)}_n&=\frac{(2n+\alpha+\beta+1)(\alpha+1,\alpha+\beta+1)_n}{(\alpha+\beta+1)n!(\beta+1)_n}\,.
\end{align} 
Using the orthogonality relation of the Jacobi polynomials in relation \eqref{eq:psiP}, the integral expression of the Wilson polynomials from \cite{Koor,genest2016tridiagonalization} is re-derived:
\begin{align}\label{eq:KoorF}
    W^{(\alpha,\beta,\gamma,\delta)}_n(m)=\frac{1}{ N_n^{(\alpha,\beta)}}\int_0^1 \psi^{(\beta,\gamma,\delta)}_m(x) \hP{\alpha,\beta}{n}(x) x^\alpha(1-x)^\beta dx\,.
\end{align}
\section{The embedding of \texorpdfstring{$\R_2$}{R2} into \texorpdfstring{$\J_2$}{J2}}\label{sec:embed}

The results of the previous section can be generalized to the case with two variables. We show here that there exists an embedding of the rank 2 Racah algebra  $\R_2$ into the rank 2 Jacobi algebra $\J_2$.

\subsection{Differential realization of \texorpdfstring{$\J_2$}{J2}}
Let $a,\ b$ and $c$ be generic real parameters and $x,y$ two variables.
The following five operators
\begin{subequations}
    \begin{align}
\Xone&=x,\qquad \Xthree=1-x-y,\label{eq:X}\\
\Lone&=y(1-x-y)\,\partial_{y}^2+\bigl((b+1)(1-x)-(b+c+2)y\bigr)\dy,\label{eq:L1}\\
\Lthree&=xy\bigl(\partial_{x}^2+\partial_{y}^2-2\partial_{x}\partial_y\bigr)+\bigl((a+1)y-(b+1)x\bigr)(\dx-\dy),
\label{eq:L3}\\
L&=x(1-x)\partial_{x}^2+y(1-y)\partial_{y}^2-2xy\,\partial_{x}\partial_y\nonumber\\
&\hspace{1.2cm}+(a+1-(a+b+c+3)x)\dx+(b+1-(a+b+c+3)y)\dy\,.\label{eq:L}
\end{align}
\end{subequations}
satisfy the rank two Jacobi algebra $\J_2$ \cite{crampe2025rank} whose the
defining relations are recalled in Appendix~\ref{APP A}.

Let us introduce affine transformations of previous generators
\begin{subequations}
    \begin{align}
C_{23}&=-\Lone+\tfrac{(b+c+1)^2-1}{4}\,,\\
C_{34}&=-\Lthree+\tfrac{(a+b+1)^2-1}{4}\,,\\
C_{234}&=-L+\tfrac{(a+b+c+2)^2-1}{4}\,.\label{eq:dictA}
\end{align}
\end{subequations}
From the relations provided in Appendix~\ref{APP A}, we can show that different pairs of generators of $\J_2$ yield a Jacobi algebra of rank 1.
We list below these different couples with the associated parameters $(\mathbf{c_1},\mathbf{c_2},\epsilon)$ of the defining relations \eqref{eq:J1}:
\begin{itemize}
    \item $(X_1,C_{34})$ with $(\mathbf{c_1},\mathbf{c_2},\epsilon)=(\frac{a^2-1}{4},\frac{b^2-1}{4},1-X_3)$;
    \item $(X_3,C_{23})$ with $(\mathbf{c_1},\mathbf{c_2},\epsilon)=(\frac{c^2-1}{4},\frac{b^2-1}{4},1-X_1)$;
    \item $(X_1,C_{234})$ with $(\mathbf{c_1},\mathbf{c_2},\epsilon)=(\frac{a^2-1}{4},C_{23},1)$;
    \item $(X_3,C_{234})$ with $(\mathbf{c_1},\mathbf{c_2},\epsilon)=(\frac{c^2-1}{4},C_{34},1)$.
\end{itemize}
At first glance, it can be surprising that generators appear in the set of parameters of the subalgebras above. However, each time, they commute with both generators of the subalgebra and we hence actually have central extensions of the rank 1 Jacobi algebra.

The pair of generators $(C_{23},C_{34})$ forms a rank 1 Racah algebra. Indeed, we can check that, setting $(\mathbf{c_{12}},\mathbf{c_{23}})=(C_{23},C_{34})$, the defining relations \eqref{eq:R1} of $\J_1$ are satisfied with the parameters
\begin{align}
    \mathbf{c_1}=\frac{c^2-1}{4}\,,\quad\mathbf{c_2}=\frac{b^2-1}{4}\,,\quad \mathbf{c_3}=\frac{a^2-1}{4}\,,\quad \mathbf{c_{123}}=C_{234}\,.
\end{align}
As previously, a generator of $\J_2$ appears in the parameters of the defining relations of this subalgebra but $C_{234}$ commutes with $C_{23}$ and $C_{34}$ and we thus have a central extension of the rank 1 Racah algebra.
It is useful to represent these five subalgebras in a pentagon as displayed in Figure~\ref{fig:pentJ}.
\begin{figure}[htbp]
\centering
\begin{tikzpicture}[scale=1.3]
    \def\R{2cm}\def\Rt{2.15cm}
    \foreach \i in {1,...,5} {\coordinate (P\i) at ({90 + (\i-1)*72}:\R);}
    \draw (P3)--(P2)--(P1)--(P5)--(P4);
    \draw[dashed] (P3)--(P4);
    \node at ({90 + (+0.13)*72}:\Rt) {$X_3$};   \node at ({90 + (-0.13)*72}:\Rt) {$X_1$};
    \node at ({90 + (-0.13+1)*72}:\Rt) {$C_{23}$};  \node at ({90 + (+0.13+1)*72}:\Rt) {$X_1$};
    \node at ({90 + (-0.13-1)*72}:\Rt) {$X_3$};  \node at ({90 + (+0.13-1)*72}:\Rt) {$C_{34}$};
    \node at ({90 + (-0.13-2)*72}:\Rt) {$C_{34}$};  \node at ({90 + (+0.13-2)*72}:\Rt) {$C_{234}$};
    \node at ({90 + (-0.13+2)*72}:\Rt) {$C_{234}$};    \node at ({90 + (+0.13+2)*72}:\Rt) {$C_{23}$};
    \foreach \i in {1,...,5} {\draw[fill] (P\i) circle (0.07);}
\end{tikzpicture}
\caption{Two generators at a vertex commute and two generators at the opposite extremities of the same edge generate one of the five subalgebras.
The four solid edges carry
rank one Jacobi algebras and the dashed base carries the rank one Racah algebra.\label{fig:pentJ}}
\end{figure}

\subsection{Realization of the rank 2 Racah algebra.}

As in Subsection~\ref{ssec:trid1},
we want to promote the two multiplication operators $\Xone$ and $\Xthree$ to Racah type generators by using special Heun operators.
As mentioned previously, $(X_1,C_{234})$ forms a rank 1 Jacobi algebra. From relation \eqref{eq:Heun1} with $\alpha=a$, $\gamma=d-b-c$ and $\delta=2a+d+2$ ($d$ a new free parameter), we obtain the particular Heun operator
\begin{align}\label{eq:defC123}
   C_{123}=&\frac{a+d+2}{2}[X_1,C_{234}]-\frac{1}{2}\{X_1,C_{234}\}\nonumber\\
   -&\frac{(a+d+2)^2-(d-b-c)^2}{4}X_1+\frac{a^2+(a+d+2)^2-2}{4}\,.
\end{align}
We know, from the results of the previous section, that the couple $(C_{234},C_{123})$ satisfies the defining relations \eqref{eq:R1} of the rank 1 Racah algebra with the parameters:
\begin{align}
    \mathbf{c_1}=\frac{a^2-1}{4}\,,\quad \mathbf{c_2}=C_{23}\,,\quad \mathbf{c_3}=\frac{(d-b-c)^2-1}{4}\,,\quad \mathbf{c_{123}}=\frac{(a+d+2)^2-1}{4}\,.
\end{align}
The parameters $\gamma$ and $\delta$  in the definition \eqref{eq:defC123} of the Heun operator $C_{123}$ have been chosen carefully such that the supplementary relations of the rank 2 Racah algebra are satisfied as explained below.

Similarly, starting from the rank 1 Jacobi algebra generated by  $(X_3,C_{234})$, we can define a Heun operator from relation \eqref{eq:Heun2} with $(\beta^2-1)/4=C_{34}$, $\gamma=d-b-c$ and $\delta-\alpha=a+d+2$ which reads as follows
\begin{align}
    C_{12}=&-\frac{a+d+2}{2}[X_3,C_{234}]+\frac{1}{2}\{X_3-1,C_{234}\}+\frac{(a+d+2)^2-(d-b-c)^2}{4}(X_3-1)\nonumber\\
    +&C_{34}+\frac{(a+d+2)^2-1}{4}\,,
\end{align}
such that the couple $(C_{234},C_{12})$ forms a rank 1 Racah algebra with the parameters
\begin{align}
    \mathbf{c_1}=\frac{c^2-1}{4}\,,\quad \mathbf{c_2}=C_{34}\,,\quad \mathbf{c_3}=\frac{(a+d+2)^2-1}{4}\,,\quad \mathbf{c_{123}}=\frac{(d-b-c)^2-1}{4}\,.
\end{align}

We can also show that the couples $(C_{12},C_{23})$ and $(C_{123},C_{34})$ satisfy the defining relations \eqref{eq:R1} of the rank 1 Racah algebra with the parameters given respectively by:
\begin{align}
   & \mathbf{c_1}=\frac{(d-b-c)^2-1}{4}\,,\quad
     \mathbf{c_2}=\frac{c^2-1}{4}\,,\quad
      \mathbf{c_3}=\frac{b^2-1}{4}\,,\quad
       \mathbf{c_{123}}=C_{123}\,,\\
    &   \mathbf{c_1}=C_{12}\,,\quad
     \mathbf{c_2}=\frac{b^2-1}{4}\,,\quad
      \mathbf{c_3}=\frac{a^2-1}{4}\,,\quad
       \mathbf{c_{123}}=\frac{(a+d+2)^2-1}{4}\,.
\end{align}
A visual way to understand these five rank 1 Racah algebras consists in showing them in a pentagon as shown in Figure~\ref{fig:pentR}. In this figure, we have replaced the generators $X_1$ and $X_3$ of Figure~\ref{fig:pentJ} by the operators
$C_{123}$ and $C_{12}$ and the five edges correspond to a rank 1 Racah algebra.
\begin{figure}[htbp]
\centering
\begin{tikzpicture}[scale=1.3]
    \def\R{2cm}\def\Rt{2.15cm}
    \foreach \i in {1,...,5} {\coordinate (P\i) at ({90 + (\i-1)*72}:\R);}
    \draw[dashed] (P3)--(P2)--(P1)--(P5)--(P4);
    \draw[dashed] (P3)--(P4);
    \node at ({90 + (+0.13)*72}:\Rt) {$C_{12}$};   \node at ({90 + (-0.13)*72}:\Rt) {$C_{123}$};
    \node at ({90 + (-0.13+1)*72}:\Rt) {$C_{23}$};  \node at ({90 + (+0.13+1)*72}:\Rt) {$C_{123}$};
    \node at ({90 + (-0.13-1)*72}:\Rt) {$C_{12}$};  \node at ({90 + (+0.13-1)*72}:\Rt) {$C_{34}$};
    \node at ({90 + (-0.13-2)*72}:\Rt) {$C_{34}$};  \node at ({90 + (+0.13-2)*72}:\Rt) {$C_{234}$};
    \node at ({90 + (-0.13+2)*72}:\Rt) {$C_{234}$};    \node at ({90 + (+0.13+2)*72}:\Rt) {$C_{23}$};
    \node at ({90 + 0*72}:1.60cm) {$\Phi$};
    \node at ({90 + 1*72}:1.60cm) {$\Psi$};
    \node at ({90 + 2*72}:1.60cm) {$J$};
    \node at ({90 + 3*72}:1.60cm) {$\widetilde J$};
    \node at ({90 + 4*72}:1.60cm) {$\widetilde\Psi$};
    \foreach \i in {1,...,5} {\draw[fill] (P\i) circle (0.07);}
\end{tikzpicture}
\caption{Both generators on the same vertex commute and at the extremities of the same dashed edge form a rank 1 Racah algebra. The joint eigenfunctions of the two commuting generators sitting at an edge is indicated inside the pentagon: $J$, $\Psi$, $\Phi$, $\widetilde\Psi$ and $\widetilde J$. These bases are defined in Section~\ref{sec:bases}. The overlap between the two bases at the endpoints of an edge is a Racah polynomial (see Section~\ref{ss:overlaps}).}\label{fig:pentR}
\end{figure}

With these relations of the rank 1 Racah algebras, we can show that the five generators defined above $C_{12},\,C_{23},\,C_{34},\,C_{123},\,C_{234}$
satisfy all the relations recalled in Appendix~\ref{app:algR4}, with the parameters given by
\begin{equation}
C_1=\tfrac{(d-b-c)^2-1}{4},\ \ C_2=\tfrac{c^2-1}{4},\ \ C_3=\tfrac{b^2-1}{4},\ \
C_4=\tfrac{a^2-1}{4},\ \ C_{1234}=\tfrac{(a+d+2)^2-1}{4}.
\label{eq:dictCent}
\end{equation}
This proves that these five generators verify the defining relations of the rank 2 Racah algebra \cite{de2017higher,crampe2021racah,crampe2023representations,post2024racah}. From this construction, we obtain a differential realization of the rank 2 Racah algebra and an embedding of this algebra inside the rank 2 Jacobi algebra.
Let us remark that the pentagon shown in Figure~\ref{fig:pentR} is only one part of a larger picture, an half-icosidodecahedron, allowing to visualize the defining relations of the rank 2 Racah algebra \cite{crampe2023representations}.

\section{Explicit bases\label{sec:bases}}

The common eigenfunctions for the different couples of commuting operators appearing in Figure~\ref{fig:pentR} are computed as well as different overlap coefficients between them. This relation generalize the Koornwinder relation \eqref{eq:KoorF} to the case of the bivariate Jacobi polynomials.

\subsection{Eigenfunctions of \texorpdfstring{$(C_{23},C_{234})$}{(C23,C234)}}
 The two-variable Jacobi polynomials are defined by
\begin{equation}
J_{n,k}(x,y)=\hP{a,\,b+c+2k+1}{n-k}(x)\,(1-x)^{k}\,\hP{b,c}{k}\!\Bigl(\tfrac{y}{1-x}\Bigr)\,.
\label{eq:Jnk}
\end{equation}
It has been demonstrated in \cite{koornwinder1975two} that these polynomials are common eigenfunctions of $(C_{23},C_{234})$, for $0\le k\le n$:
\begin{align}
C_{234}\,J_{n,k}&=\frac{(2n+a+b+c+ 2)^2-1}{4}J_{n,k}\,,\label{eq:C234J}\\
C_{23}\,J_{n,k}&=\frac{(2k+b+c+1)^2-1}{4}J_{n,k}\,.\label{eq:C23J}
\end{align}
Let us recall the proof of these relations which is based on the change of variables
\begin{align}\label{eq:ux}
    u=x\,,\qquad v=\frac{y}{1-x}\,.
\end{align}
With these changes, the operators $C_{23}$ and $C_{234}$ become
\begin{align}
    C_{23}=&v(v-1)\partial_v^2-(b+1-(b+c+2)v)\partial_v+\frac{(b+c+1)^2-1}{4}\,,\label{eq:expC23}\\
    C_{234}=&u(u-1)\partial_u^2-(a+1-(a+b+c+3)u)\partial_u+\frac{(a+b+c+2)^2-1}{4}\nonumber\\
    +&\frac{1}{1-u}\left(v(1-v)\partial_v^2-(b+1-(b+c+2)v)\partial_v\right)\,.
\end{align}
Then, relations \eqref{eq:C234J} and \eqref{eq:C23J} can be proven using the eigenvalue problem \eqref{eq:c12P} for the univariate case.

Using the factorization \eqref{eq:Jnk}, the change of  variable \eqref{eq:ux} with $dxdy=(1-u)dudv$ and the orthogonality relation \eqref{eq:ortho} for the univariate Jacobi polynomials, we show that the polynomials \eqref{eq:Jnk} are orthogonal on the triangle $\Delta=\{(x,y)\,|\, 0\le x,\, 0\le y,\, x+y\le 1\}$ and for $a,b,c>-1$,
\begin{equation}
\int_\Delta J_{n,k}(x,y)\,J_{n',k'}(x,y)\,x^{a}y^{b}(1-x-y)^{c}\,dx\,dy=h_{n,k}\,\delta_{nn'}\delta_{kk'},\qquad
h_{n,k}=N_k^{(b,c)}\,N_{n-k}^{(a,\,b+c+2k+1)}.
\label{eq:hnk}
\end{equation}

\subsection{Eigenfunctions of \texorpdfstring{$(C_{23},C_{123})$}{(C23,C123)}\label{ss:pair1}}
The common eigenfunctions of the commuting operators $C_{23}$ and $C_{123}$ are given in the following theorem.
\begin{theorem}\label{thm:pair1}
 The function
\begin{equation}
\Psi_{m,k}(x,y)=
 \psi^{(2k+b+c+1,d-b-c,2a+d+2)}_{m}(x)\,
 (1-x)^k\,\hP{b,c}{k}\!\Bigl(\tfrac{y}{1-x}\Bigr)\,,
\label{eq:Psi1}
\end{equation}
where $\psi$ is defined by \eqref{eq:givz36},
satisfies
\begin{align}
\label{eq:C23Psi}  &  C_{23}\Psi_{m,k}=\frac{(2k+b+c+1)^2-1}{4}\Psi_{m,k}\,,\\
  &  C_{123}\Psi_{m,k}=\frac{4m^2-1}{4}\Psi_{m,k}\,.
\end{align}
\end{theorem}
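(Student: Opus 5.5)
The plan is to work in the variables $(u,v)$ of \eqref{eq:ux}, where the operators separate, and reduce both eigenvalue equations to the univariate results of Section~\ref{sec:rank1}. Write $\Psi_{m,k}=f(u)\,(1-u)^k\hP{b,c}{k}(v)$ with $f=\psi^{(2k+b+c+1,d-b-c,2a+d+2)}_m$.

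The first equation \eqref{eq:C23Psi} is immediate. By \eqref{eq:expC23}, $C_{23}$ acts only on $v$, and it is the operator $\mathbf c_{12}$ of \eqref{eq:c12} with $\epsilon=1$, $\alpha=b$, $\beta=c$. Hence \eqref{eq:c12P} gives the eigenvalue $\frac{(2k+b+c+1)^2-1}{4}$.

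For $C_{123}$, the key observation is that $C_{123}$ is built from $X_1=u$ and $C_{234}$ only, through \eqref{eq:defC123}. In $C_{234}$, the $v$-dependent piece $\frac{1}{1-u}(\cdots)$ acts on $\hP{b,c}{k}(v)$ by the scalar $-k(k+b+c+1)$. So on the subspace of functions $g(u)\hP{b,c}{k}(v)$, the operator $C_{234}$ restricts to an ordinary differential operator in $u$:
\[
u(u-1)\partial_u^2-(a+1-(a+b+c+3)u)\partial_u+\frac{(a+b+c+2)^2-1}{4}-\frac{k(k+b+c+1)}{1-u}.
\]
Next I conjugate by $(1-u)^k$, i.e.\ write $g=(1-u)^k h$. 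A routine computation should show that on $h$ this becomes exactly $\mathbf c_{12}$ of \eqref{eq:c12} in the variable $u$, with $\epsilon=1$, $\alpha=a$ and $\beta=b+c+2k+1$. This is the same mechanism that yields \eqref{eq:C234J} from \eqref{eq:c12P}. The check is that the $1/(1-u)$ term cancels against the terms produced by differentiating $(1-u)^k$, and that the constant shifts from $\frac{(a+b+c+2)^2-1}{4}$ to $\frac{(a+b+c+2k+2)^2-1}{4}$.

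Since $X_1=u$ commutes with multiplication by $(1-u)^k\hP{b,c}{k}(v)$, the restriction of $C_{123}$ is then precisely the rank one Heun operator $\mathbf c_{23}$ of \eqref{eq:Heun1}. Its parameters are $\epsilon=1$, $\alpha=a$, $\beta=2k+b+c+1$, $\gamma=d-b-c$ and $\delta=2a+d+2$, so that $\delta-\alpha=a+d+2$. It acts on $h=f$, and \eqref{eq:c23psi} with $\epsilon=1$ gives the eigenvalue $\frac{4m^2-1}{4}$. Note that $\mathbf c_{23}$ depends on $\beta$ only through $\mathbf c_{12}$, which is why the $k$-dependence enters solely via the first superscript of $\psi$.

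The main obstacle is the conjugation step: verifying that $(1-u)^{-k}\circ C_{234}|_k\circ(1-u)^k$ equals the Jacobi operator with $\beta=b+c+2k+1$, including the constant term. If the constant did not match, the constant in \eqref{eq:Heun1} would have to be rechecked, but I expect it to match by the same computation underlying \eqref{eq:C234J}. A second subtlety is to confirm that the commutator and anticommutator in \eqref{eq:defC123} restrict consistently to the invariant subspace. This is clear, since both $X_1$ and $C_{234}$ preserve it.
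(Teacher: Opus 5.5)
Your strategy is correct and organized differently from the paper's, but one sign is wrong. As written, it breaks exactly the conjugation step you flagged. The $v$-piece of $C_{234}$ does not act on $\hP{b,c}{k}(v)$ by $-k(k+b+c+1)$. Write $\ell^{(\alpha,\beta)}_u$ for the operator \eqref{eq:hyp1} in the variable $u$. A direct change of variables gives $L=\ell^{(a,b+c+1)}_u+\frac{1}{1-u}L_1$ with $L_1=\ell^{(b,c)}_v$. Hence, by \eqref{eq:expC23}, the $v$-piece of $C_{234}$ is $\frac{1}{1-u}\bigl(C_{23}-\frac{(b+c+1)^2-1}{4}\bigr)$, which acts by $+\frac{k(k+b+c+1)}{1-u}$. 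The $(u,v)$ form of $C_{234}$ printed after \eqref{eq:expC23} has $v(1-v)\partial_v^2$ where $v(v-1)\partial_v^2$ is meant. The paper's $(u,v)$ form of $C_{123}$, whose $v$-part is $-\frac{u}{1-u}(v(v-1)\partial_v^2-\dots)$, agrees with the corrected sign. The sign matters because
\[
-\ell^{(a,\beta)}_u\bigl[(1-u)^k h\bigr]=(1-u)^k\Bigl(-\ell^{(a,\beta+2k)}_u h+k(a+\beta+k+1)\,h-\frac{k(k+\beta)}{1-u}\,h\Bigr),
\]
so with your sign the singular terms would double rather than cancel. With the corrected sign and $\beta=b+c+1$ they cancel, and the constant becomes $\frac{(a+b+c+2)^2-1}{4}+k(a+b+c+k+2)=\frac{(a+b+c+2k+2)^2-1}{4}$. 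Thus $(1-u)^{-k}C_{234}|_k(1-u)^k$ is exactly $\mathbf c_{12}$ with $\epsilon=1$, $\alpha=a$, $\beta=b+c+2k+1$, and the rest of your argument goes through unchanged.

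The paper's route differs from yours. It rewrites $C_{123}$ itself in the variables $(u,v)$ and notes that its $v$-dependence is $-\frac{u}{1-u}$ times the $v$-operator of $C_{23}$. It then substitutes the eigenvalue on $\hP{b,c}{k}$ and appeals to \eqref{eq:c23psi}. It leaves implicit the gauge step that absorbs $(1-u)^k$ and the leftover term $-\frac{u}{1-u}k(k+b+c+1)$ into the shift $b+c+1\to 2k+b+c+1$ of $\beta$. You never compute $C_{123}$ in coordinates. Since $C_{123}$ is the Heun expression \eqref{eq:defC123} in $X_1$ and $C_{234}$, and restriction to the invariant sector $\{g(u)\hP{b,c}{k}(v)\}$ and conjugation by $(1-u)^k$ both respect products and fix $u$, one conjugation identity for the Jacobi operator suffices. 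That is the same identity behind \eqref{eq:C234J}. Your route also explains structurally why $k$ enters only through the first superscript of $\psi$. The paper's route, in return, gives the explicit second-order form of $C_{123}$ on the triangle.
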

\begin{proof}
Relation \eqref{eq:C23Psi} is proven using expression \eqref{eq:expC23} of $C_{23}$ and the eigenvalue problem \eqref{eq:c12P} for the univariate case.

For the second relation, let us perform the change of variable \eqref{eq:ux} in the operator $C_{123}$ to get
\begin{align}
 C_{123}=&   u^2(1-u)\partial_u^2+ (2a+d+4-(2a+b+c+d+6)u)u\partial_u
    -(a + b + c+ \frac{5}{2})(a + d + \frac{5}{2}) u\nonumber\\
    +&\frac{(2a + d +  3)^2-1}{4}-\frac{u}{1-u}(v(v-1)\partial_v^2-(b+1-(b+c+2)v)\partial_v)\,.
\end{align}
The above operator factorizes and the part with respect to $v$ is diagonalized by $\Psi_{m,k}$ using  \eqref{eq:expC23}. Then, using relation \eqref{eq:c23psi}, the second relation of the theorem is proven.
\end{proof}

\subsection{Eigenfunctions of \texorpdfstring{$(C_{34},C_{234})$}{(C34,C234)}}

Another two-variable Jacobi polynomials, obtained using the symmetry of the pentagon \cite{crampe2025algebraic},  can be defined:
\begin{equation}
\widetilde{J}_{n,k}(x,y)=\hP{a,b}{k}\!\Bigl(\tfrac{x}{x+y}\Bigr)\,(x+y)^{k}\, \hP{a+b+2k+1,c}{n-k}(x+y)\,\,.
\label{eq:Jnkt}
\end{equation}
These polynomials are common eigenfunctions of $(C_{34},C_{234})$, for $0\le k\le n$:
\begin{align}
C_{34}\,\widetilde{J}_{n,k}&=\frac{(2k+a+b+ 1)^2-1}{4}\widetilde{J}_{n,k}\,,\label{eq:C34Jt}\\
C_{234}\,\widetilde{J}_{n,k}&=\frac{(2n+a+b+c+2)^2-1}{4}\widetilde{J}_{n,k}\,.\label{eq:C234Jt}
\end{align}
The proof of these relations is based on the change of variable:
\begin{align}\label{eq:stxy}
    s=x+y\,,\qquad t=\frac{x}{x+y}\,.
\end{align}
With these changes, the operator $C_{34}$  becomes
\begin{align}
C_{34}&=t(t-1)\partial_t^2-(a+1-(a+b+2)t)\partial_t+\frac{(a + b + 1)^2-1}{4}\,,\label{eq:expC34}
\end{align}
and we obtain a new expression for $C_{234}$:
\begin{align}
    C_{234}=&s(s-1)\partial_s^2-(a+b+2-(a+b+c+3)s)\partial_s+\frac{1}{s}C_{34}\\
    -&\frac{(a + b + 1)^2-1}{4s}+\frac{(a+b+c+2)^2-1}{4}\,.\nonumber
\end{align}
Then, relations \eqref{eq:C34Jt} and \eqref{eq:C234Jt} can be proven using the eigenvalue problem \eqref{eq:c12P} for the univariate case.

Using the factorization \eqref{eq:Jnkt}, the change of  variable \eqref{eq:stxy} with $dxdy=sdsdt$ and the orthogonality relation \eqref{eq:ortho} for the univariate Jacobi polynomials, we show that the polynomials \eqref{eq:Jnkt} are orthogonal on the triangle $\Delta=\{(x,y)\,|\, 0\le x,\, 0\le y,\, x+y\le 1\}$ and for $a,b,c>-1$,
\begin{equation}
\int_\Delta\widetilde{J}_{n,k}(x,y)\,\widetilde{J}_{n',k'}(x,y)\,x^{a}y^{b}(1-x-y)^{c}\,dx\,dy=h_{n,k}\,\delta_{nn'}\delta_{kk'},\qquad
h_{n,k}=N_k^{(a,b)}\,N_{n-k}^{(a+b+2k+1,\,c)}.
\label{eq:thnk}
\end{equation}

\subsection{Eigenfunctions of \texorpdfstring{$(C_{12},C_{34})$}{(C12,C34)}\label{ss:pair5}}
The common eigenfunctions of the commuting operators $C_{12}$ and $C_{34}$ are given in the following theorem.
\begin{theorem}\label{thm:pair5}
 The function
\begin{equation}
\widetilde{\Psi}_{p,k}(x,y)=
\hP{a,b}{k}\!\Bigl(\tfrac{x}{x+y}\Bigr)\,  \psi^{(c,d-b-c,2a+b+d+3)}_{p}(x+y)\,,
\label{eq:Psi5}
\end{equation}
where $\psi$ is defined by \eqref{eq:givz36},
satisfies
\begin{align}
\label{eq:C12Psit}  &  C_{12}\widetilde{\Psi}_{p,k}=\frac{4p^2-1}{4}\widetilde{\Psi}_{p,k}\,,\\
\label{eq:C34Psit}   &  C_{34}\widetilde{\Psi}_{p,k}=\frac{(2k+a+b+1)^2-1}{4}\widetilde{\Psi}_{p,k}\,.
\end{align}
\end{theorem}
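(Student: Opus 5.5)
The proof follows the pattern of Theorem~\ref{thm:pair1}, using the change of variables \eqref{eq:stxy}, $s=x+y$, $t=x/(x+y)$, instead of \eqref{eq:ux}.

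\emph{The $C_{34}$ relation.} Relation \eqref{eq:C34Psit} is immediate. By \eqref{eq:expC34}, $C_{34}$ involves only $t$, and it coincides with $\mathbf c_{12}$ for $\alpha=a$, $\beta=b$, $\epsilon=1$. The factor $\psi_p(s)$ passes through untouched, and \eqref{eq:c12P} gives the eigenvalue $\frac{(2k+a+b+1)^2-1}{4}$.

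\emph{Rewriting $C_{12}$ in $(s,t)$.} For \eqref{eq:C12Psit}, note first that $X_3-1=-s$. Next, $C_{234}$ in the $(s,t)$ variables is the expression given above: a hypergeometric operator in $s$ plus $\frac1s\bigl(C_{34}-\frac{(a+b+1)^2-1}{4}\bigr)$. I would substitute this into
\[
C_{12}=-\tfrac{a+d+2}{2}[X_3,C_{234}]+\tfrac12\{X_3-1,C_{234}\}+\dots
\]
Since $X_3=1-s$ is a function of $s$ alone, it commutes with the $\frac1s C_{34}$ term. Hence in the commutator only the $s$-derivative part contributes, and in the anticommutator that term gives $-s\cdot\frac1s C_{34}\cdot 2/2=-C_{34}$. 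This cancels against the explicit $+C_{34}$ in $C_{12}$, and the constant $-\frac{(a+b+1)^2-1}{4}$ pieces combine similarly. The outcome should be
\[
C_{12}=D_s+\kappa\,C_{34}+\text{const},
\]
where $D_s$ is a second-order operator in $s$ only. I expect the coefficient $\kappa$ of $C_{34}$ to be zero, or the result to be expressible as an operator in $s$ whose parameter depends on $C_{34}$. Indeed, from the rank~1 Racah structure, $(C_{234},C_{12})$ is the Heun operator \eqref{eq:Heun2} with $(\beta^2-1)/4=C_{34}$.

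\emph{Reduction to the rank one result.} On the eigenspace $C_{34}=\frac{(2k+a+b+1)^2-1}{4}$, that is $\beta=2k+a+b+1$, the operator $C_{12}$ becomes exactly the one-variable $\mathbf{c'_{23}}$ of \eqref{eq:Heun2}, acting in the variable $1-s=X_3$. The parameters are: $\alpha=c$ (from the $(X_3,C_{234})$ Jacobi pair), $\beta=2k+a+b+1$, $\gamma=d-b-c$ and $\delta-\alpha=a+d+2$, with $\epsilon=1$. However, the factor $\hP{a,b}{k}(t)$ carries degree $k$ in $s$ through the homogeneity $(x+y)^k$. Note that $\widetilde\Psi$ as written contains no explicit $s^k$ factor; I would check that this is absorbed in $\psi$ through its prefactor $x^{m-\frac12(\delta+1)}$ with $\delta=2a+b+d+3$.

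\emph{Main obstacle.} The hardest step is matching parameters so that the $s$-part of $C_{12}$ on this eigenspace is precisely $\mathbf{c_{23}}$ (or $\mathbf{c'_{23}}$ after $s\leftrightarrow 1-s$) with parameters $(\beta,\gamma,\delta)=(c,d-b-c,2a+b+d+3)$, as in the superscript of $\psi$. The key point is that the $k$-dependence disappears. I would verify this by writing the $s$-operator explicitly, as was done for $C_{123}$ in $u$, in the form
\[
s^2(1-s)\partial_s^2+(\cdots)s\,\partial_s+\cdots+\frac{(\cdot)}{s},
\]
and checking that the $1/s$ term with $C_{34}$ cancels as argued above. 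I would then compare the resulting operator term by term with the explicit form of $\mathbf{c_{23}}$ given after \eqref{eq:Heun1}. Finally, \eqref{eq:c23psi} gives the eigenvalue $\frac{4p^2-1}{4}$.
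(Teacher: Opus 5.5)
Your proof is essentially the paper's: in the variables $(s,t)$ the piece $\frac1s\bigl(C_{34}-\frac{(a+b+1)^2-1}{4}\bigr)$ of $C_{234}$ commutes with $X_3=1-s$, and its anticommutator contribution $-C_{34}$ cancels the explicit $+C_{34}$. So $C_{12}$ is the pure $s$-operator \eqref{eq:expC12st}, and \eqref{eq:c23psi} concludes. Your cancellation already proves $\kappa=0$, so the hedge is unnecessary. Moreover, the $s$-part of $C_{234}$ is exactly $\mathbf{c_{12}}$ with $(\alpha,\beta)=(a+b+1,c)$ and $\epsilon=1$. The same computation therefore shows directly that $C_{12}$ is \eqref{eq:Heun1} with $X=s$ and $(\alpha,\beta,\gamma,\delta)=(a+b+1,c,d-b-c,2a+b+d+3)$, with no term-by-term comparison needed.

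Your ``reduction'' paragraph is misstated, however. $C_{12}$ itself has no $k$-dependence. The operator $\mathbf{c'_{23}}$ with $\beta=2k+a+b+1$ is what $C_{12}$ induces on $f$ when functions are written as $s^k f(s)\hP{a,b}{k}(t)$, i.e.\ it is $s^{-k}C_{12}s^k$; this works because, on the $C_{34}$-eigenspace, $s^{-k}C_{234}s^k$ is $\mathbf{c_{12}}$ in $s$ with $(\alpha,\beta)=(a+b+2k+1,c)$. Its eigenfunction $\psi^{(c,d-b-c,2a+b+d+3+2k)}_p$ recovers $\widetilde\Psi_{p,k}$ via $s^k\psi^{(c,d-b-c,\delta+2k)}_p(s)=\psi^{(c,d-b-c,\delta)}_p(s)$. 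Note also that $\hP{a,b}{k}(t)$ itself carries no $s$-dependence.
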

\begin{proof}
Relation \eqref{eq:C34Psit} is proven using expression \eqref{eq:expC34} of $C_{34}$ and the eigenvalue problem \eqref{eq:c12P} for the univariate case.

With the change of variable \eqref{eq:stxy}, the operator $C_{12}$ simplifies to
\begin{align}
\label{eq:expC12st} C_{12}=&s^2(1-s)\partial_s^2  +(2a+b+d+5-(2a+b+c+d+6)s) s\partial_s\\
 -& \frac{(2a + 2d + 5)(2a + 2b + 2c + 5)}{4} s+\frac{(2a+b+d+4)^2-1}{4}\,.\nonumber
\end{align}
The above operator depends only on $s$ and, using relation \eqref{eq:c23psi}, the second relation of the theorem is proven.
\end{proof}

\subsection{Eigenfunctions of \texorpdfstring{$(C_{12},C_{123})$}{(C12,C123)}}\label{ss:pair2}

\begin{theorem}\label{thm:pair2}
The functions $\Phi_{m,p}$ defined by
\begin{equation}
\Phi_{m,p}(x,y)=\psi_m^{(b,2p,2a+d+2)}\left(\frac{x}{x+y}\right)
\psi_p^{(c,d-b-c,2a+b+d+3)}(x+y)\,,
\label{eq:Phi}
\end{equation}
satisfy the following eigenvalue problems:
\begin{align}
    C_{12}\Phi_{m,p}(x,y)=\frac{4p^2-1}{4}\Phi_{m,p}(x,y)\,,\\
      C_{123}\Phi_{m,p}(x,y)=\frac{4m^2-1}{4}\Phi_{m,p}(x,y)\,.
\end{align}
\end{theorem}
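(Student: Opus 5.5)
I will work throughout in the variables $(s,t)$ of \eqref{eq:stxy}, $s=x+y$, $t=x/(x+y)$, in which $\Phi_{m,p}=\psi_m^{(b,2p,2a+d+2)}(t)\,\psi_p^{(c,d-b-c,2a+b+d+3)}(s)$ is a product of a function of $t$ and a function of $s$.

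\emph{The $C_{12}$ eigenvalue.} By \eqref{eq:expC12st}, $C_{12}$ involves only $s$ and $\partial_s$. It therefore acts only on the second factor of $\Phi_{m,p}$. Comparing \eqref{eq:expC12st} with the explicit form of $\mathbf{c_{23}}$ for $\beta=c$, $\gamma=d-b-c$, $\delta=2a+b+d+3$ and $\epsilon=1$ shows that they coincide; this identification was already used in the proof of Theorem~\ref{thm:pair5}. Relation \eqref{eq:c23psi} then gives $C_{12}\Phi_{m,p}=\frac{4p^2-1}{4}\Phi_{m,p}$.

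\emph{The $C_{123}$ eigenvalue.} This is the main step. I will rewrite $C_{123}$, given by \eqref{eq:defC123} in terms of $X_1=st$ and $C_{234}$, in the variables $(s,t)$. The form of $C_{234}$ in these variables is already available from the previous subsection, so only the commutator and anticommutator with $st$ need computing. I expect the result to separate as
\[
C_{123}=\mathcal{A}_t+\mathcal{B}(s,\partial_s,\partial_t),
\]
where $\mathcal{A}_t$ is a Heun-type operator in $t$. Its data should be the rank one Jacobi pair $(X,\mathbf c_{12})=(t,C_{34})$ with parameters $\alpha=a$, $\beta=b$, together with a $\gamma$ that depends on the $s$-part. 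The key identity to aim for is that $C_{123}$ can be written as the $t$-operator $\mathbf{c_{23}}$ with $\alpha=a$, $\beta=b$, $\delta=2a+d+2$, $\epsilon=1$, in which $\gamma^2/4$ is replaced by an operator built from $C_{12}$ plus a constant. Concretely I expect $\frac{\gamma^2-1}{4}\leftrightarrow C_{12}$.

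This expectation is consistent with the pentagon structure. The pair $(C_{123},C_{34})$ is a Racah algebra with $\mathbf{c_1}=C_{12}$ central. In the $t$-picture, the $\gamma$-parameter of the rank one Heun operator plays the role of the "$\mathbf{c_3}$" central element. Since $C_{12}$ commutes with $C_{123}$, the cross terms in $s$ should assemble exactly into $C_{12}$.

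Granting this identity, I would act on $\Phi_{m,p}$. The operator $C_{12}$ acts on the $s$-factor with eigenvalue $\frac{4p^2-1}{4}$, which sets $\gamma=2p$. That matches the middle superscript of $\psi_m^{(b,2p,2a+d+2)}$. The remaining $t$-operator is then exactly $\mathbf{c_{23}}$ with $(\beta,\gamma,\delta)=(b,2p,2a+d+2)$, and \eqref{eq:c23psi} yields the eigenvalue $\frac{4m^2-1}{4}$.

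The obstacle is the explicit change of variables. It requires checking that every term of $C_{123}$ with $s$-dependence recombines into $\frac{t}{?}\,(C_{12}-\text{const})$, with coefficient matching $\frac{\gamma^2}{4}\,\frac{x}{\epsilon}$ in $\mathbf{c_{23}}$. Any leftover mixed $\partial_s\partial_t$ terms must cancel. If a residual $s$-dependence appears, I would first check whether it is an $s$-power times $t$ that can be absorbed by the prefactor $x^{m-\frac12(\delta+1)}$ structure. Failing that, I would verify the identity directly by applying the operator to the product ansatz with symbolic computation.
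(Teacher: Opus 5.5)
Your plan is the paper's proof. After passing to $(s,t)$, the paper records
\[
C_{123}=t^2(1-t)\partial_t^2+\bigl(2a+d+4-(2a+b+d+5)t\bigr)t\,\partial_t-\frac{(2a+b+d+4)^2-1}{4}\,t+t\,C_{12}+\frac{(2a+d+3)^2-1}{4}.
\]
This is exactly the identity you conjectured: no mixed derivatives, coefficient exactly $t$ in front of $C_{12}$ (so $\epsilon=1$), $\beta=b$, $\delta=2a+d+2$ and $\frac{\gamma^2-1}{4}\leftrightarrow C_{12}$. The paper then applies \eqref{eq:c23psi} in $s$ and in $t$ just as you do.

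The only thing you leave undone is that change of variables, and it closes directly. Set $K=C_{234}-\frac{(a+b+c+2)^2-1}{4}$, so that $C_{123}=-X_1K-\frac{a+d+3}{2}[K,X_1]+(\text{multiplication terms})$. Since $K$ has diagonal principal part in $(s,t)$, you get $[K,st]=2s(s-1)t\,\partial_s+2t(t-1)\,\partial_t+(a+b+c+3)st-(a+1)$. The $s$-derivative terms are then $t$ times those of $C_{12}$, and the multiplicative terms match using $(a+d+2)-(d-b-c)=a+b+c+2$.
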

\begin{proof}
With the change of variable \eqref{eq:stxy}, the operator $C_{12}$ is given by \eqref{eq:expC12st} and $C_{123}$ simplifies to
\begin{align}
 C_{123}=& t^2(1-t)\partial_t^2
 +(2a+d+4-(2a+b+d+5)t) t\partial_t-\frac{(2a+b+d + 4  )^2-1}{4}t \\
 +& tC_{12}+\frac{(2a + d + 3)^2-1}{4}\,.\nonumber
\end{align}
Using relation \eqref{eq:c23psi}, the theorem is proven.
\end{proof}

\section{Overlap coefficients and Tratnik polynomials }\label{ss:overlaps}

In previous section, common eigenfunctions to different pairs of commuting generators of $\R_2$ appearing at each vertex of the pentagon in Figure~\ref{fig:pentR} have been computed explicitly. We recall that each dashed of this pentagon of carries a rank one Racah algebra $\R_1$ whose two generators are the two
generators sitting at the opposite ends of that edge and the generator common to the two
edges is central for this pair. This implies, by the representation theory of $\R_1$ recalled in
Section~\ref{sec:rank1}, that the overlap coefficients between the two
eigenbases attached to the endpoints of an edge is therefore always a Racah (Wilson)
polynomial with the eigenvalue of the central generator entering only as a spectator
parameter. We establish these coefficients for two of the five edges for which the shared generator
acts, in suitable coordinates, as a \emph{one-variable} operator. The two bases then share a common
factor and the weight of the orthogonality relation factorizes.
We recall that 
the overlap coefficients between $J$ and $\widetilde J$ corresponds to the Dunkl formula proven in \cite{dunkl1984orthogonal}.

\medskip

\subsection{Edge \texorpdfstring{$\widetilde\Psi$--$\widetilde J$}{PsiJ} (shared generator \texorpdfstring{$C_{34}$}{C34})}\label{sss:PsitJt}

The overlap coefficients between $\widetilde{\Psi}_{m,k}$ and $\widetilde{J}_{n,k}$, the eigenbasis of $(C_{12},C_{34})$ and the one of $(C_{34},C_{234})$, respectively are computed:
\begin{align}
    &\int_{\Delta}dxdy\;x^ay^b(1-x-y)^c \;\widetilde{\Psi}_{p,k}(x,y)\widetilde{J}_{n,k}(x,y)\nonumber  \\
    =&\int_0^1 ds\;s^{a+b+k+1}(1-s)^c\int_0^1dt\,t^a(1-t)^b\hP{a,b}{k}(t)\,  \psi^{(c,d-b-c,2a+b+d+3)}_{p}(s)
    \hP{a,b}{k}(t)\,\hP{a+b+2k+1,c}{n-k}(s)\nonumber\\
    =&N_k^{(a,b)}\int_0^1 ds\;s^{a+b+k+1}(1-s)^c  \psi^{(c,d-b-c,2a+b+d+3)}_{p}(s)
    \,\hP{a+b+2k+1,c}{n-k}(s)\nonumber\\
    =&N_k^{(a,b)}\int_0^1 ds\;s^{a+b+2k+1}(1-s)^c  \psi^{(c,d-b-c,2a+b+d+3+2k)}_{p}(s)
    \,\hP{a+b+2k+1,c}{n-k}(s)\nonumber\\
      =&N_k^{(a,b)}N_{n-k}^{(a+b+2k+1,c)} W_{n-k}^{(a+b+2k+1,c,d-b-c,2a+b+d+3+2k)}(p)\,.
\end{align}
The first equality is computed using the explicit expressions \eqref{eq:Psi5} and \eqref{eq:Jnkt} of $\widetilde{\Psi}_{p,k}(x,y)$ and $\widetilde{J}_{n,k}(x,y)$ and with the change of variable  $s=x+y$ and $t=x/(x+y)$. The second equality is obtained from the orthogonality relation \eqref{eq:ortho} of $\widehat P$. The third is based on the relation 
$\psi^{(c,d-b-c,\delta)}_{p}(s)=s^k\psi^{(c,d-b-c,\delta+2k)}_{p}(s)$. The last one is equivalent to the Koornwinder formula \eqref{eq:KoorF}. Therefore, the functions $\widetilde{\Psi}$ can be expressed in terms of $\widetilde{J}$ with the coefficients given by Wilson polynomials:
\begin{align}
    \widetilde{\Psi}_{p,k}(x,y)=\sum_{n=k}^\infty W_{n-k}^{(a+b+2k+1,c,d-b-c,2a+b+d+3+2k)}(p)\; \widetilde{J}_{n,k}(x,y)\,.
\end{align}

Let us emphasize that this result agrees with the fact that the pair $(C_{12},C_{234})$ generates an $\R_1$ algebra with central data $(C_2,C_{34},C_{1234},C_1)$.

\subsection{Edge \texorpdfstring{$\Phi$--$\widetilde\Psi$}{PPs} (shared generator \texorpdfstring{$C_{12}$}{C12})}\label{sss:PhiPsit}
The eigenfunctions $\Phi_{m,p}$ of $(C_{12},C_{123})$ defined by \eqref{eq:Phi} 
can be transformed using the expansion \eqref{eq:psiP} for $\psi_m^{(b,2p,2a+d+2)}(\frac{x}{x+y})$:
\begin{align}\label{eq:PhiPsit}
    \Phi_{m,p}&= \sum_{k=0}^{\infty}
 W^{(a,b,2p,2a+d+2)}_k\left(m\right) \;\widetilde{\Psi}_{p,k}\,.
\end{align}
This formula provides the overlap coefficients between the eigenbasis $\Phi$ of $(C_{12},C_{123})$ and the one $\widetilde{\Psi}$ of $(C_{12},C_{34})$. This matches again the prediction obtained by looking at the $\R_1$ algebra generated by $C_{123}$ and $C_{34}$.

\subsection{Composite overlap \texorpdfstring{$\Phi$--$\widetilde{J}$}{PtJ} and bivariate Racah polynomials of Tratnik}\label{sss:PhiJ}
Composing the two previous edges that join $\Phi$ to
$\widetilde{J}$ through $\widetilde{\Psi}$ gives
\begin{equation}
\Phi_{m,p}=\sum_{k=0}^{\infty}\sum_{n=k}^{\infty} T_{n,k}(m,p)
\widetilde{J}_{n,k}\,,
\label{eq:PhiJ}
\end{equation}
where 
\begin{equation}
    T_{n,k}(m,p)=  W^{(a,b,2p,2a+d+2)}_k\left(m\right) W_{n-k}^{(a+b+2k+1,c,d-b-c,2a+b+d+3+2k)}(p)\,.
\end{equation}
The overlap coefficients  $  T_{n,k}(m,p)$ is thus the product of two univariate Wilson polynomial: the first one have the parameters depending on the variable of the second one and the parameters of the second one depend on the degree of the first one. This is the bivariate polynomials introduced by Tratnik in \cite{Tratnik1991}.

\section{Conclusion}\label{sec:conc}
We have constructed an explicit embedding of the rank two Racah algebra $\R_2$ into the rank two Jacobi
algebra $\J_2$, by promoting the two multiplication operators of the differential model to Racah type
generators $C_{12},C_{123}$ by using algebraic Heun operatora. This turns the four edges of Figure~\ref{fig:pentJ} corresponding to rank one Jacobi algebras  into rank one Racah algebras as displayed in Figure~\ref{fig:pentR}. We determined the common eigenfunctions of the five commuting
pairs of operators shown in Figure~\ref{fig:pentR} in terms of Jacobi polynomials or Gauss
hypergeometric functions. We computed some overlap coefficients between these eigenbases and, in particular, composing two adjacent edges yields the $\Phi$--$\widetilde{J}$ overlap coefficients as the
bivariate Racah polynomials of Tratnik.

Several directions invite further work. The closed forms obtained here are continuous-spectrum
eigenfunctions. Their truncations provide different finite-dimensional representations of $\R_2$ and the full
recoupling graph (the folded icosidodecahedron described in \cite{crampe2023representations}) deserves a systematic treatment in the present
differential model. The extension to more variables, the $q$-deformation toward the Askey--Wilson level,
and the role of these constructions in the generic superintegrable models \cite{crampe2025super} are
natural avenues.

\appendix
\section{Structure relations of the rank two Jacobi algebra \texorpdfstring{$\J_2$}{J2}}\label{APP A}
The five generators $L,\Lone,\Lthree,\Xone,\Xthree$ satisfy
\begin{equation}
[L,\Lone]=[L,\Lthree]=[\Lone,\Xone]=[\Lthree,\Xthree]=[\Xone,\Xthree]=0,\label{zero}
\end{equation}
together with the relations below.
\subsection{Commutators involving \texorpdfstring{$[L,X_1]$}{LX1}}
\begin{align}
[[L,\Xone],L]&=2\anticomm{\Xone}{L}-2L+2\Lone-(a+b+c+1)((a+b+c+3)\Xone-(a+1)I),\label{LX1L}\\
[[L,\Xone],\Lone]&=0,\\
[[L,\Xone],\Lthree]&=\anticomm{\Xone}{L+\Lthree}+\anticomm{\Xthree-I}{L-\Lone}\nonumber\\
&\quad-(a+b+c+1)\bigl((a+1)(\Xone+\Xthree-I)+(b+1)\Xone\bigr),\\
[[L,\Xone],\Xone]&=-2\Xone^2+2\Xone,\label{LX1X1}\\
[[L,\Xone],\Xthree]&=-2\Xone\Xthree.\label{LX1X3}
\end{align}

\subsection{Commutators involving \texorpdfstring{$[L,X_3]$}{LX3}}
\begin{align}
[[L,\Xthree],L]&=2\anticomm{\Xthree}{L}-2L+2\Lthree-(a+b+c+1)((a+b+c+3)\Xthree-(c+1)I),\label{LX3L}\\
[[L,\Xthree],\Lone]&=\anticomm{\Xone-I}{L-\Lthree}+\anticomm{\Xthree}{L+\Lone}\nonumber\\
&\quad-(a+b+c+1)\bigl((c+1)(\Xone+\Xthree-I)+(b+1)\Xthree\bigr),\\
[[L,\Xthree],\Lthree]&=0,\qquad [[L,\Xthree],\Xone]=[[L,\Xone],\Xthree],\\
[[L,\Xthree],\Xthree]&=-2\Xthree^2+2\Xthree.\label{LX3X3}
\end{align}

\subsection{Commutators involving \texorpdfstring{$[L_1,L_3]$}{L1L3}}
\begin{align}
[[\Lone,\Lthree],L]&=0,\\
[[\Lone,\Lthree],\Lone]&=2\anticomm{\Lone}{\Lthree}+2\Lone^2-2\Lone L+(b+c)(b+1)(L-\Lone-\Lthree)\nonumber\\
&\quad-(b+c)(c+1)\Lthree-(b-c)(a+1)\Lone,\label{L1L3L1}\\
[[\Lone,\Lthree],\Lthree]&=-2\anticomm{\Lone}{\Lthree}-2\Lthree^2+2\Lthree L-(a+b)(b+1)(L-\Lone-\Lthree)\nonumber\\
&\quad+(a+b)(a+1)\Lone+(b-a)(c+1)\Lthree,\label{L1L3L3}\\
[[\Lone,\Lthree],\Xone]&=-\anticomm{\Xone-I}{L-\Lone-\Lthree}-\anticomm{\Xthree}{L-\Lone}\nonumber\\
&\quad+(a+1)\bigl((b+1)\Xthree+(c+1)(\Xone+\Xthree-I)\bigr),\\
[[\Lone,\Lthree],\Xthree]&=\anticomm{\Xone}{L-\Lthree}+\anticomm{\Xthree-I}{L-\Lone-\Lthree}\nonumber\\
&\quad-(c+1)\bigl((a+1)(\Xone+\Xthree-I)+(b+1)\Xone\bigr).
\end{align}

\subsection{Commutators involving \texorpdfstring{$[L_1,X_3]$}{L1X3}}
\begin{align}
[[\Lone,\Xthree],L]&=[[L,\Xthree],\Lone],\\
[[\Lone,\Xthree],\Lone]&=2\anticomm{\Xthree}{\Lone}+\anticomm{\Xone}{\Lone}-2\Lone
-(b+c)\bigl((b+c+2)\Xthree-(c+1)(I-\Xone)\bigr),\label{L1X3L1}\\
[[\Lone,\Xthree],\Lthree]&=[[\Lone,\Lthree],\Xthree],\qquad [[\Lone,\Xthree],\Xone]=0,\\
[[\Lone,\Xthree],\Xthree]&=-2\Xthree^2+2(I-\Xone)\Xthree.\label{L1X3X3}
\end{align}

\subsection{Commutators involving \texorpdfstring{$[L_3,X_1]$}{L3X1}}
\begin{align}
[[\Lthree,\Xone],L]&=[[L,\Xone],\Lthree],\qquad [[\Lthree,\Xone],\Lone]=-[[\Lone,\Lthree],\Xone],\\
[[\Lthree,\Xone],\Lthree]&=\anticomm{2\Xone+\Xthree-I}{\Lthree}
-(a+b)\bigl((a+1)(\Xone+\Xthree-I)+(b+1)\Xone\bigr),\label{L3X1L3}\\
[[\Lthree,\Xone],\Xone]&=-2\Xone(\Xone+\Xthree-I),\qquad [[\Lthree,\Xone],\Xthree]=0.\label{L3X1X1}
\end{align}
Commutators obtainable from the Jacobi identity $[[X,Y],Z]+[[Y,Z],X]+[[Z,X],Y]=0$ are indicated by the
expression they equal rather than repeated.

\subsection{An implied relation}
Using the Jacobi identity and \eqref{LX1L},\eqref{LX1X3} the commutator $[[L,\Xone],[L,\Xthree]]$ can be
computed in two ways, whose comparison yields
\begin{equation}
[\Lone,\Xthree]-[L,\Xthree]=[L,\Xone]-[\Lthree,\Xone].\label{identity}
\end{equation}

\section{Presentation of the rank 2 Racah algebra \texorpdfstring{$\R_2$}{R2} \label{app:algR4}}

The rank two Racah algebra $\R_2$
has been studied in \cite{de2017higher,crampe2021racah,crampe2023representations,post2024racah}.
We recall in this appendix the presentation used in \cite{crampe2023representations}.

The algebra $\R_2$ is generated by the elements
$C_{12}$, $C_{23}$, $C_{34}$, $C_{123}$ and $C_{234}$ as well as central elements $C_{1234}$ and $C_{j}$, $j=1,2,3,4$, subject to different types of defining relations:
\subsection{Commutativity relations.}
\begin{equation}
[C_{12}\,,\,C_{34}]=0\,,\quad [C_{12}\,,\,C_{123}]=0\,,\quad [C_{23}\,,\,C_{123}]=0\,,\quad [C_{23}\,,\,C_{234}]=0\,,\quad
[C_{34}\,,\,C_{234}]=0\,.
\end{equation}

\subsection{\texorpdfstring{$\R_1$}{R1}-type relations.} The five couples $(C_{12},C_{23})$, $(C_{23},C_{34})$, $(C_{123},C_{34})$, $(C_{234},C_{123})$ and $(C_{234},C_{12})$ satisfy relations \eqref{eq:R1} with $(\mathbf{c_1},\mathbf{c_2},\mathbf{c_3},\mathbf{c_{123}})=$ $(C_1,C_2,C_3,C_{123})$,
$\left(C_2\,,C_3\,,C_4\,,C_{234}\right)$,
$(C_{12},C_3,C_4,C_{1234})$,
$(C_4,C_{23},C_1,C_{1234})$ and
$(C_2,C_{34},C_{1234},C_1)$, respectively.
These five $\R_1$-type relations are the ones displayed on Figure~\ref{fig:pentR}.

\subsection{\texorpdfstring{$\R_2$}{R2}-type relations.}
To finish to define $\R_2$, the generators must satisfy the following relations:
\begin{eqnarray}
  [C_{12},C_{23}] + [C_{23},C_{34}] - [C_{123},C_{34}] - [C_{12},C_{234}] + [C_{123},C_{234}] = 0,
\end{eqnarray}
\begin{eqnarray}
\frac12\big[C_{34}\,,\,[C_{12},C_{23}]\big] &=&
C_{12}\big(C_{23}+C_{34}-C_{234}-C_3\big) +C_{23}\big(C_{34}-C_{1234}\big)
-C_{34}\,C_{2}
\nonumber\\
&&+C_{123}\big(C_{234}-C_{34}-C_{2}\big)-C_{234}\,C_3+(C_2+C_3)\,C_{1234}+C_2\,C_3\,,
\quad\\
\frac12 \big[C_{23}\,,\,[C_{123},C_{34}]\big]  &=&
C_{12}\big(-C_{23}+C_{234}-C_4\big) +C_{23}\big(C_{123}-C_{4}\big)
+C_{34}\big(C_{23}-C_{1}\big)
\nonumber\\
&&+C_{123}\big(C_{34}-C_{234}-C_{3}\big)-C_{234}\,C_3+(C_1+C_3)\,C_{4}+C_1\,C_3\,,
\\
\frac12 \big[C_{234}\,,\,[C_{23},C_{12}]\big]  &=&
C_{12}\big(C_{234}-C_4\big) +C_{23}\big(C_{12}-C_{34}+C_{234}-C_{1}\big)
-C_{34}\,C_{1}
\nonumber\\
&&+C_{123}\big(C_{34}-C_{234}-C_{2}\big)-C_{234}\,C_2+(C_1+C_2)\,C_{4}+C_1\,C_2\,,
\\
\frac12 \big[C_{234}\,,\,[C_{123},C_{34}]\big]  &=&
C_{12}\big(-C_{23}+C_{234}+C_4\big)
+C_{34}\big(C_{23} -C_{123} -C_{234} +C_{1234}\big)
\nonumber\\
&&+C_{23}\,C_{1234} +C_{123}\big(-C_{234}+C_{2}\big)+C_{234}\,C_4
\nonumber\\
&&-(C_2+C_{1234})\,C_{4}-C_2\,C_{1234}\,.
\end{eqnarray}
\begin{align}
&\big(C_{12}+C_{234}-C_{2}-C_{1234}\big)\,[C_{12},C_{23}]
+\big(C_{12}-C_2+C_1\big)\,[C_{23},C_{34}]
\nonumber\\
+&\big(C_{123}-C_{23}-C_{12}+C_2\big)\,[C_{12},C_{234}]
+\big(C_{12}-C_2-C_1\big)\,[C_{34},C_{123}]
= 0,\\
&\big(-C_{234}+C_{34}+C_2\big)\,[C_{12},C_{23}]
+\big(C_{12}-C_1+C_2\big)\,[C_{23},C_{34}] \nonumber
\\
+&\big(C_{23}-C_{2}-C_3\big)\,[C_{12},C_{234}]
+2C_2\,[C_{34},C_{123}]
= 0,\\
&\big(-C_{34}-C_3+C_4\big)\,[C_{12},C_{23}] +\big(C_{123}-C_{12}-C_3\big)\,[C_{23},C_{34}] \nonumber
\\
+& 2C_3\,[C_{12},C_{234}] +\big(C_{23}-C_{2}-C_3\big)\,[C_{34},C_{123}]
= 0,\\
&\big(-C_{34}+C_3-C_4\big)\,[C_{12},C_{23}]
+\big(-C_{123}-C_{34}+C_3+C_{1234}\big)\,[C_{23},C_{34}] \nonumber
\\
+&\big(C_{34}-C_{4}-C_3\big)\,[C_{12},C_{234}]
+\big(C_{234}-C_{23}-C_{34}+C_3\big)\,[C_{34},C_{123}]
= 0.
\end{align}

\subsection{\texorpdfstring{$\J_2$}{J2} as a contraction of \texorpdfstring{$\R_2$}{R2}}\label{sec:contraction}

We finally describe how $\J_2$ relates to a contraction of $\R_2$.

Let us rescale the following generators of $\R_2$:
\begin{equation}
C_{12}\to\tau C_{12},\qquad C_{123}\to\tau C_{123},\qquad C_{1234}\to\tau C_{1234},
\end{equation}
and leave unchanged the other generators. Retaining in the relations only the leading terms as
$\tau\to\infty$, a straightforward computation shows that the assignments (up to constant shifts)
\begin{align}
&C_{12}\to\Xthree,\quad C_{23}\to-\Lone+\tfrac14(b+c)(b+c+2),\quad
C_{34}\to-\Lthree+\tfrac14(a+b)(a+b+2),\nonumber\\
&C_{123}\to I-\Xone,\quad C_{234}\to-L+\tfrac14(a+b+c+3)(a+b+c+1),\label{eq:contr}
\end{align}
with central values
\begin{equation}
C_1=0,\quad C_2=\tfrac14(c^2-1),\quad C_3=\tfrac14(b^2-1),\quad C_4=\tfrac14(a^2-1),\quad C_{1234}=1,
\label{eq:contrCent}
\end{equation}
provide a realization of $\J_2$, \textit{i.e.},  recover $\J_2$. Thus the embedding of Section~\ref{sec:embed} and
the contraction \eqref{eq:contr}--\eqref{eq:contrCent} are the two directions relating the rank two Racah
and Jacobi algebras.

\bibliographystyle{unsrt}
\bibliography{ref_R2inJ2}

\end{document}